\def\ZZ{{\mathbb Z}}
\newtheorem{thm}{Теорема}
\newtheorem*{thm*}{Теорема}
\newtheorem{lem}{Лемма}
\newtheorem{prop}{Предложение}
\theoremstyle{definition} 
\newtheorem*{defn*}{Определение}
\begin{document}

\title{Модель <<шашки Фейнмана>> с поглощением}
\author{Дмитриев Михаил}
\date{15 Апреля 2022}
\maketitle
\begin{abstract}

 Приводится новое элементарное доказательство теоремы Амбаиниса  и соавторов о том, что в модели ''шашки Фейнмана'' амплитуды вероятности поглощения в начальной точке после 4n шагов пропорциональны числам Каталана. Также впервые вычислены вероятности поглощения в точках, близких к начальной.

\textbf{Ключевые слова:} Шашки Фейнмана, квантовые блуждания, числа Каталана, метод отражений

\textbf{MSC2010:} 82B20, 81T25

\end{abstract}

\section*{Введение}

''Шашки Фейнмана''--- это наиболнее элементарная модель движения электрона. Модель была предложена Ричардом Фейманом в 1965 году \cite{Feynman-Gibbs}. 
Она известна также как одномерное квантовое блуждание \cite{Ambainis}. Модель наиболее примечательна тем, что она наглядно иллюстрирует многие базовые идеи квантовой теории и является одной из архитектур универсального квантового компьютера \cite{Venegas-Andraca-12}. Актуальные обзоры приводятся в \cite{Kempe-09, Skopenkov, Venegas-Andraca-12}.
 
 В основу данной работы лег известный результат Амбаиниса и его коллег \cite{Ambainis} о том, что вероятность возвращения одномерного квантового блуждания в начальную точку равна $\frac{2}{\pi}$ (теорема \ref{tm}). Напомним, что для классического случайного блуждания эта вероятность равна $1$ (теорема Пойа).
  Мы приводим новое элементарное доказательство этого результата и обобщаем его на точки, близкие к начальной (теорема \ref{thm}).

\section*{Определение модели}

Мы начнем с неформального определения модели ''шашки Фейнмана'', а затем дадим точное. Этот раздел почти полностью заимствован из работы \cite{Skopenkov}.
	
\begin{figure}[h]

	\includegraphics[scale = 0.95]{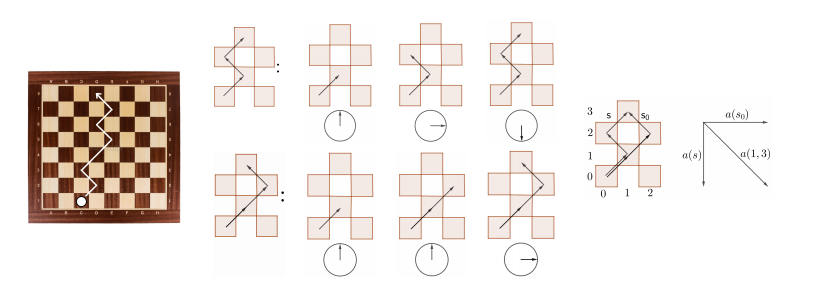}
	\vspace{-.3cm}
	\caption{\cite{Skopenkov} Пути шашек}
	\label{Checker-paths}

\end{figure}	
	
	 На бесконечной шахматной доске шашка ходит на соседнюю по диагонали клетку влево-вверх или вправо-вверх (рис.\ref{Checker-paths}). Каждому пути $s$ шашки сопоставим вектор $a(s)$ на плоскости следующим образом. В начале движения этот вектор направлен вверх и имеет длину $1$.
Пока шашка движется вдоль прямой, вектор не меняется, а после каждого поворота шашки он поворачивается на $90^\circ$ по часовой стрелке (независимо от того, в какую сторону повернула шашка). В конце движения вектор сжимается $2^{(t-1)/2}$ раз, где $t$ --- общее число ходов шашки (т.е. заменяется на вектор такого же направления, но длины $1/2^{(t-1)/2}$). Полученный в итоге вектор и есть $a(s)$. Например, для пути на рис.\ref{Checker-paths} слева вектор $a(s)=(1/8,0)$.% направлен вправо и имеет длину $1/8$.

 Обозначим $a(x,t):=\sum_s a(s)$, где суммирование ведется по всем путям $s$ шашки из клетки $(0,0)$ в клетку $(x,t)$, \emph{начинающимся с хода вправо-вверх}. Если таких путей нет, то будем считать $a(x,t):=\vec{0}$. Например, $a(1,3)=(0,-1/2)+(1/2,0)=(1/2,-1/2)$; рис.\ref{Checker-paths} справа.
 
 Число $P(x,t) = |a(x,t)|^2$ называется \emph{вероятностью обнаружения электрона в клетке $(x,t)$, если он испущен из клетки $(0,0)$}.

\begin{figure}
	\centering
	\includegraphics[scale = 0.6]{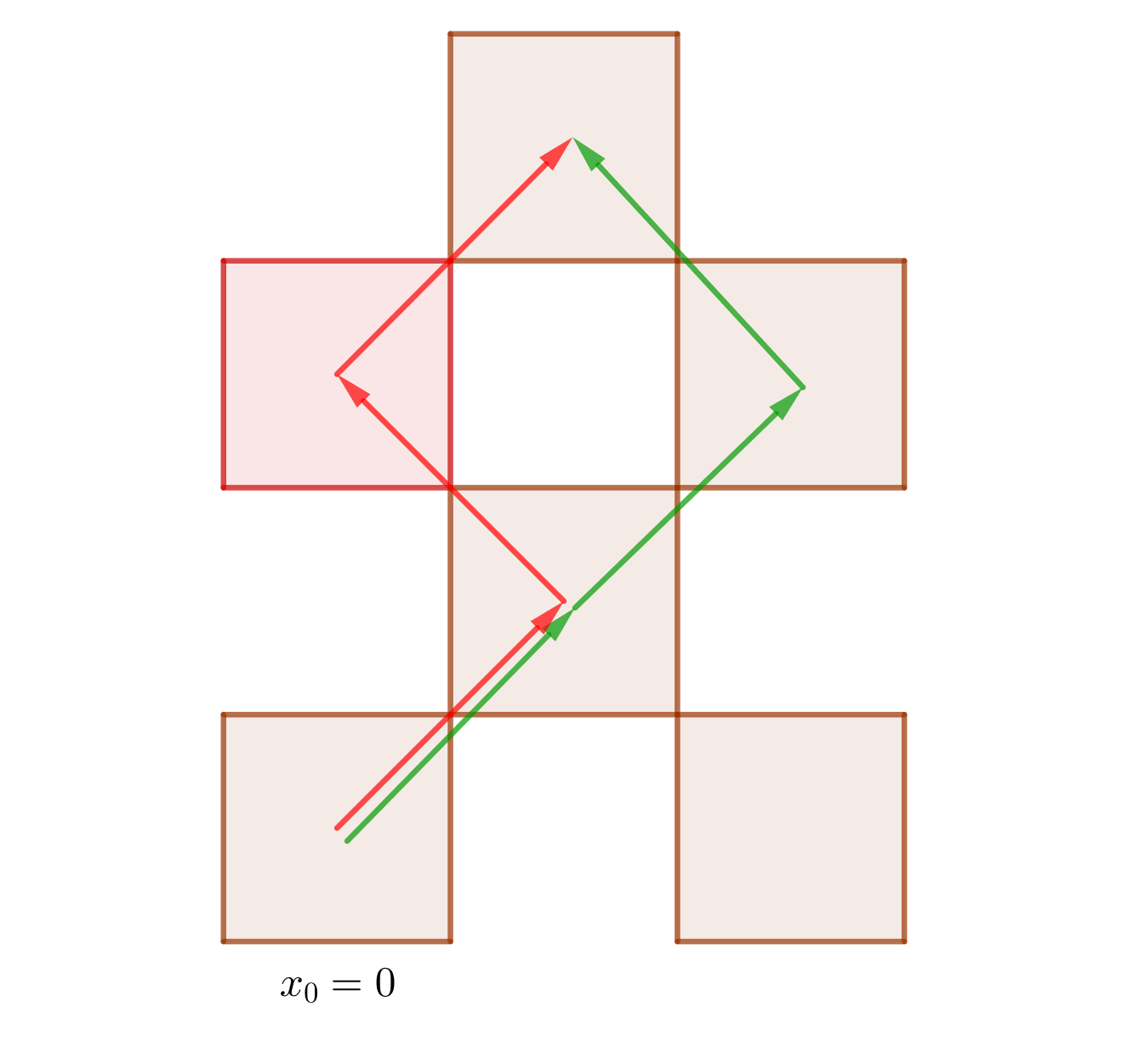}
	\vspace{-.3cm}
	\caption{Пути для модели с поглощением}
	\label{twoways}

\end{figure}

\emph{Вероятность обнаружения электрона в клетке $(x,t)$ при поглощении на прямой $x = x_0$} определяется аналогично вероятности $P(x,t)$, только суммирование производится по путям $s$, не проходящим через клетки вертикали $x=x_0$, за исключением, быть может, начальной и конечной клеток. Обозначим эту вероятность через $P(x,t \,\mathrm{bypass}\, x_0)$.
 Например, на рис. \ref{twoways} красный путь мы не будем учитывать при суммировании, а зеленый будем. Соответственно, для модели с поглощением $a(1,3\,\mathrm{bypass}\, 0)=(1/2,0).$ Значения $a(x,t \,\mathrm{bypass}\, 0)$ при малых $x$ и $t$ показаны на рис. \ref{Count}.

Резюмируем эту конструкцию в виде точного определения.

\begin{defn*} \label{def-basic} \emph{Путь шашки} --- это конечная последовательность целых точек плоскости такая, что вектор из каждой точки (кроме последней) к следующей равен либо $(1,1)$, либо $(-1,1)$. \emph{Поворот} --- это такая точка пути  (не первая и не последняя), что вектор, соединяющий эту точку с предыдущей, ортогонален вектору,  соединяющему её со следующей.
\emph{Стрелка} (или \emph{амплитуда вероятности}) --- это комплексное число
 $$
a(x,t \,\mathrm{bypass}\, x_0):=2^{(1-t)/2}\,i\,\sum_s (-i)^{\mathrm{turns}(s)},
$$ 
где сумма берётся по всем путям $s$ шашки из точки $(0,0)$ в точку $(x,t)$, проходящим через точку $(1,1)$ и не проходящим через точки прямой $x=x_0$, кроме, быть может, начальной и конечной точки пути, а $\mathrm{turns}(s)$ обозначает общее число поворотов в $s$.
% \emph{turns} in $s$ (i.e., vertices with a right angle).
Здесь и далее пустая сумма по определению считается равной $0$.
Обозначим  $$P(x, t \,\mathrm{bypass}\, x_0):=|{a}(x, t \,\mathrm{bypass}\, x_0)|^2.
$$
\end{defn*}

\begin{figure}[h]
	\begin{center}
	\includegraphics[scale = 0.95]{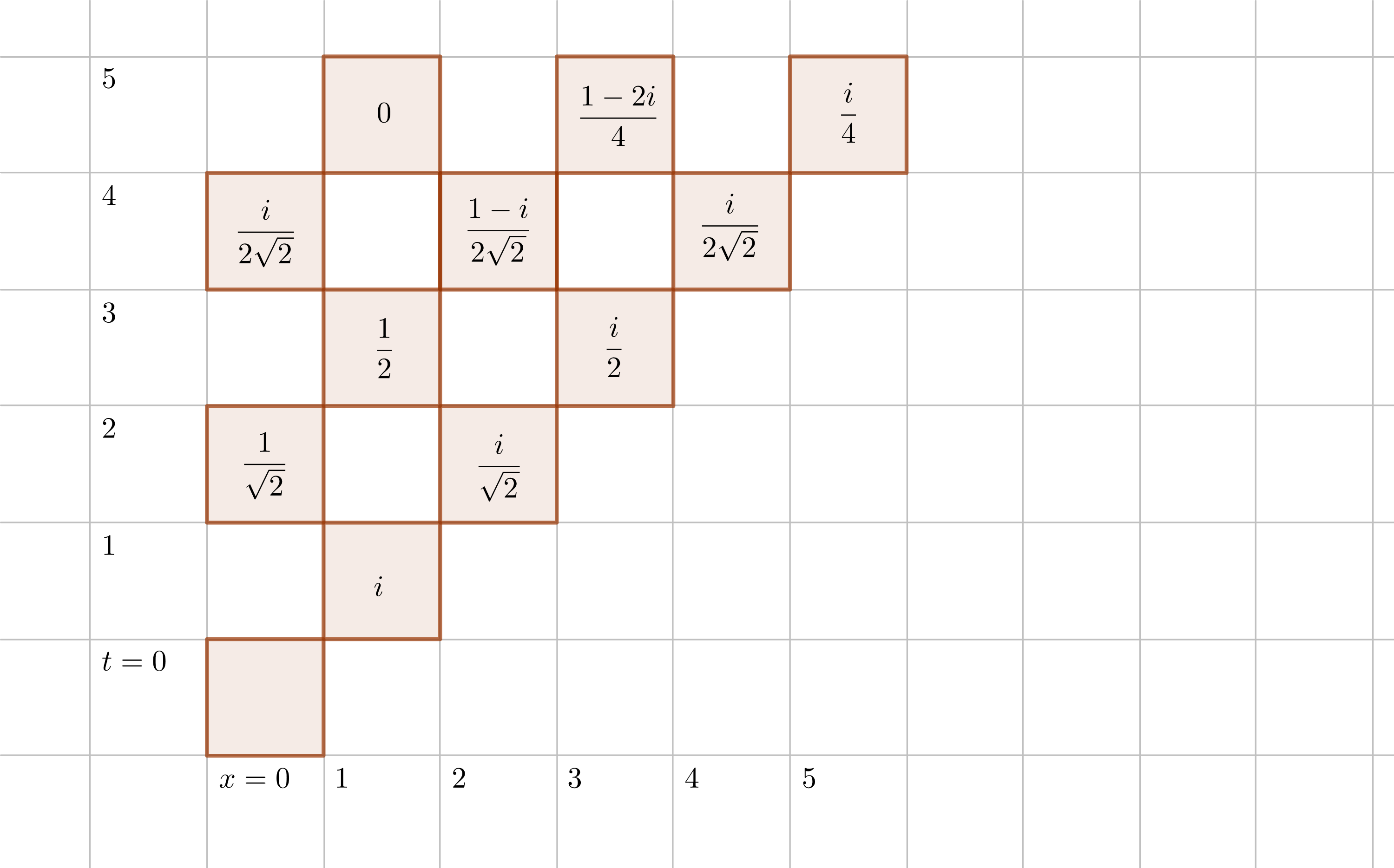}
	\vspace{-.3cm}
	\caption{ Значения $a(x,t \, \mathrm{bypass} \, 0)$ при $0\le x,t\le 5$}
	\label{Count}
	\end{center}
\end{figure}

Исследование данной модели началось со следующего яркого результата.

\begin{thm}[Амбаинис и др., \cite{Ambainis}]
\label{tm} Для любого целого $t > 0$ выполнено
 \begin{equation*}
a(0,t \,\mathrm{bypass}\, 0)= 
 \begin{cases}
   \frac{1}{\sqrt{2}}, &t = 2;\\
    \frac{(-1)^k (^{2k}_k)}{(k+1)2^{2k + 3/2}},& t = 4k + 4, \, \text{где } k \in \ZZ ;\\
   0, &\text{иначе.}
 \end{cases}
\end{equation*} Более того, $$\sum_{t = 1}^\infty P(0,t \,\mathrm{bypass}\, 0) = \frac{2}{\pi}.$$

\end{thm}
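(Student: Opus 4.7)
First I would classify the paths that contribute to $a(0,t\,\text{bypass}\,0)$. A path from $(0,0)$ to $(0,t)$ with steps $R=(1,1)$ and $L=(-1,1)$ has equal numbers of $R$'s and $L$'s, so $t=2n$ must be even. Requiring the first step to be $R$ and $x=0$ to be avoided at interior integer times forces the word to have the form $R^{r_1}L^{l_1}\cdots R^{r_m}L^{l_m}$ with $r_i,l_i\ge 1$, $\sum r_i=\sum l_i=n$, and all partial sums $a_i=\sum_{j\le i}(r_j-l_j)\ge 1$ for $i<m$. Such a word has exactly $2m-1$ turns, and $(-i)^{2m-1}=(-1)^m\,i$, so the definition gives
\[
a(0,2n\,\text{bypass}\,0)=-2^{(1-2n)/2}\sum_{m=1}^{n}(-1)^m f(n,m),
\]
where $f(n,m)$ is the number of admissible words with $m$ $R$-blocks.

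Next I would identify $f(n,m)$ with a Narayana number. The inequalities $a_i\ge 1$ force $r_1,l_m\ge 2$ whenever $m\ge 2$, and then erasing the initial $R$ and the final $L$ is a bijection between admissible words for $n\ge 2$ and Dyck paths of length $2n-2$ with $m$ peaks; hence $f(n,m)=N(n-1,m)=\frac{1}{n-1}\binom{n-1}{m}\binom{n-1}{m-1}$, while $f(1,1)=1$. The resulting alternating Narayana sum then collapses via coefficient extraction:
\[
\sum_{m=1}^{N}(-1)^m N(N,m)=-\frac{1}{N}\sum_{j=0}^{N-1}(-1)^j\binom{N}{j}\binom{N}{j+1}=-\frac{1}{N}\,[x^{N-1}](1-x^2)^N,
\]
which vanishes when $N$ is even and equals $(-1)^{k+1}C_k$ when $N=2k+1$ (using $\binom{2k+1}{k}/(2k+1)=C_k$). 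Setting $N=n-1$ and treating the case $t=2$ separately recovers the three cases of the closed form stated in the theorem.

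Finally, for the probability series, only $t=2$ and $t=4k+4$ contribute, giving
\[
\sum_{t\ge 1}P(0,t\,\text{bypass}\,0)=\tfrac12+\tfrac18\sum_{k\ge 0}\frac{C_k^2}{16^k}.
\]
I would evaluate $\sum_k C_k^2/16^k$ via the Catalan moment representation $C_k=\frac{1}{2\pi}\int_0^4 y^k\sqrt{(4-y)/y}\,dy$: interchanging sum and integral and using $\sum_k C_k z^k=(1-\sqrt{1-4z})/(2z)$ yields, after the substitution $y=4\sin^2\theta$ and the simplification $(1-\cos\theta)/\sin^2\theta=1/(1+\cos\theta)$, the integral $\frac{8}{\pi}\int_0^{\pi/2}\cos^2\theta/(1+\cos\theta)\,d\theta$. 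Splitting the integrand as $\cos\theta-1+1/(1+\cos\theta)$ and using $1+\cos\theta=2\cos^2(\theta/2)$ gives $2-\pi/2$, so $\sum_k C_k^2/16^k=16/\pi-4$ and the total probability equals $\tfrac12+(16/\pi-4)/8=2/\pi$.

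The main obstacle I expect is the middle step: carefully translating the geometric bypass constraint into the partial-sum inequalities $a_i\ge 1$, verifying that the erasure map is a peak-preserving bijection with Dyck paths (in particular the corner conditions $r_1,l_m\ge 2$ when $m\ge 2$), and then spotting that the alternating Narayana sum is an instance of the Vandermonde-type identity $(1+x)^N(1-x)^N=(1-x^2)^N$. The integral evaluation in the last step is routine once the Catalan moment formula is in hand.
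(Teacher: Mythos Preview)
Your argument is correct and takes a genuinely different route from the paper's. For the closed form, the paper (in its appendix) proves a self-convolution recurrence by a first-return-to-$x=1$ decomposition (Lemma~\ref{polovina}) and then verifies the stated expression by induction on $n=t/2$, appealing to the Catalan convolution identity at the inductive step. You instead count contributing paths directly by the number $m$ of maximal $R$-blocks, recognise the resulting count as the Narayana number $N(n-1,m)$ via the ``strip one $R$ and one $L$'' bijection with Dyck paths graded by peaks, and collapse the alternating Narayana sum with the $(1+x)^N(1-x)^N=(1-x^2)^N$ coefficient-extraction trick. Your approach makes the vanishing for $t\equiv 2\pmod 4$, $t>2$, fall out immediately from the parity of $N=n-1$, while in the paper it emerges from the odd-$n$ branch of the induction; the paper's approach, on the other hand, avoids the Narayana machinery and stays closer to the generating-function-free spirit of the rest of the article. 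For the identity $\sum_t P(0,t\,\mathrm{bypass}\,0)=2/\pi$ the paper gives no argument of its own and simply refers to~\cite{Ambainis}; your Catalan-moment integral computation is a clean self-contained evaluation of $\sum_k C_k^2/16^k=16/\pi-4$ and checks out line by line.
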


Последнее выражение -- это вероятность поглощения электрона в начальной точке.

Далее в разделе ''Вариации'' мы докажем первую часть теоремы способом отличным от такового в статье Амбаиниса.

\section*{Основной результат}

Теперь сформулируем основную теорему. Это утверждение было предложено в качестве гипотезы Г. Минаевым и И. Русских в 2019.

\begin{thm}
\label{thm}
$\sum_{t = 1}^\infty P(3,t \,\mathrm{bypass}\, 3) = 2\sum_{t = 1}^\infty P(-1,t \,\mathrm{bypass}\, {-1}) = \frac{8}{\pi} - 2.$ 
\end{thm}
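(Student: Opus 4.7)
The approach is to extend the generating-function computation behind the second half of Theorem~\ref{tm} to the two amplitudes appearing in Theorem~\ref{thm}. From the explicit formula for $a(0,t\,\mathrm{bypass}\,0)$ in Theorem~\ref{tm} one recognises the Catalan generating function and derives the closed algebraic form
\[
A(z)\;:=\;\sum_{t=1}^{\infty}a(0,t\,\mathrm{bypass}\,0)\,z^{t}
\;=\;\frac{z^{2}-1+\sqrt{1+z^{4}}}{\sqrt{2}},
\]
whose squared norm on $|z|=1$ integrates, via Parseval, to $2/\pi$, recovering the numerical part of Theorem~\ref{tm}. The goal is then to derive analogous closed forms for $B_{k}(z):=\sum_{t\ge 1}a(k,t\,\mathrm{bypass}\,k)\,z^{t}$ at $k=3$ and $k=-1$, and to evaluate the corresponding Parseval integrals.

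\textbf{Reduction of $B_{3}$ and $B_{-1}$ to $A$.} The heart of the proof is a first-passage decomposition combined with a reflection argument. After the initial step $(0,0)\to(1,1)$ the walker sits at $(1,1)$ moving NE; reflection across the line $x=1$ puts the sub-paths from $(1,1)$ to $(3,t)$ bypassing $x=3$ in bijection with the sub-paths from $(1,1)$ to $(-1,t)$ bypassing $x=-1$. This bijection preserves the internal turn count of each sub-path but flips the ``transition turn'' between the initial NE step and the second step. Splitting each amplitude into sub-amplitudes $X_{t},Y_{t}$ according to whether the walker continues NE or turns at $(1,1)$, one obtains
\[
a(3,t\,\mathrm{bypass}\,3)\;=\;X_{t}+Y_{t},\qquad
a(-1,t\,\mathrm{bypass}\,-1)\;=\;i\,(Y_{t}-X_{t}).
\]
Iterating the decomposition by conditioning on first-passage times to the intermediate lines $x=2$ (for $B_{3}$) and $x=0$ (for $B_{-1}$), and using translation invariance of the unconstrained walk, $B_{3}(z)$ and $B_{-1}(z)$ are expressed as explicit algebraic combinations of $A(z)$ via a $2\times 2$ transfer matrix indexed by the walker's coin (direction) state at each barrier crossing.

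\textbf{Parseval evaluation and anticipated difficulty.} With closed forms for $B_{3}(z)$ and $B_{-1}(z)$ in hand, Parseval's identity
\[
\sum_{t=1}^{\infty}\bigl|a(k,t\,\mathrm{bypass}\,k)\bigr|^{2}
\;=\;\frac{1}{2\pi}\int_{0}^{2\pi}\bigl|B_{k}(e^{i\theta})\bigr|^{2}\,d\theta
\qquad(k\in\{3,-1\})
\]
reduces each sum to a contour integral involving rational functions of $\sqrt{1+e^{4i\theta}}$; these should evaluate, as for the $2/\pi$ of Theorem~\ref{tm}, to expressions of the form $c/\pi-d$ with explicit integer $c,d$, yielding $\sum_{t}P(3,t\,\mathrm{bypass}\,3)=8/\pi-2$ and $\sum_{t}P(-1,t\,\mathrm{bypass}\,-1)=4/\pi-1$, from which the factor-of-two identity is immediate. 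The principal obstacle I anticipate is the second step: because the walk carries a coin (direction) state, the first-passage amplitude across a single lattice line is $2\times 2$ matrix-valued rather than scalar, and one must carefully track both the entering and exiting coin states at each barrier. Organising the resulting matrix products so that $B_{3}(z)$ and $B_{-1}(z)$ simplify to expressions whose Parseval integrals sit in precisely the $2{:}1$ ratio---and hit the exact targets $8/\pi-2$ and $4/\pi-1$---is where the main combinatorial and algebraic work lies.
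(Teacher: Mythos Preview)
Your plan is a legitimate alternative route, but it differs substantially from the paper's argument and leaves the decisive step unexecuted. The paper never computes a new Parseval integral. Instead it proves, via three short bijections (a shift by $(-1,-1)$, a reflection across $x=1$, and a split on the direction of the last step), the \emph{termwise} identities
\[
a(-1,t\,\mathrm{bypass}\,-1)=\sqrt{2}\,a(0,t{+}1\,\mathrm{bypass}\,0)\quad(t>2),\qquad
a(3,t\,\mathrm{bypass}\,3)=-i\sqrt{2}\bigl(a(0,t{+}1\,\mathrm{bypass}\,0)+a(0,t{-}1\,\mathrm{bypass}\,0)\bigr)\quad(t>3).
\]
Since by Theorem~\ref{tm} the sequence $a(0,s\,\mathrm{bypass}\,0)$ is supported only at $s=2$ and $s\equiv 0\pmod 4$, the two shifted summands on the right of the second identity are never simultaneously nonzero, so $P(3,t\,\mathrm{bypass}\,3)=2P(0,t{+}1\,\mathrm{bypass}\,0)+2P(0,t{-}1\,\mathrm{bypass}\,0)$ with no cross term. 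Both target sums then reduce to the already known value $\sum_{t}P(0,t\,\mathrm{bypass}\,0)=2/\pi$ plus a handful of explicit low-$t$ corrections.

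In your generating-function language this says precisely that $B_{-1}(z)=\sqrt{2}\,z^{-1}A(z)$ and $B_{3}(z)=-i\sqrt{2}\,(z+z^{-1})A(z)$ up to finitely many low-order coefficients, so the $2\times2$ transfer-matrix machinery you flag as the principal obstacle collapses to a scalar shift. Your outline would eventually reach this, but as written it stops exactly where the work begins: the reduction of $B_{3},B_{-1}$ to $A$ is only announced, not performed. And without the support observation above, a direct Parseval evaluation of $|B_{3}(e^{i\theta})|^{2}$ would still face the cross term $2\,\mathrm{Re}\int \overline{z^{-1}A(z)}\,zA(z)\,\tfrac{d\theta}{2\pi}$, whose vanishing is not automatic---it is equivalent to that same arithmetic fact about the support of $a(0,\cdot\,\mathrm{bypass}\,0)$. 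So the paper's bijective route is both shorter and identifies the one nontrivial ingredient your plan leaves implicit.
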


Основной интерес представляет первое равенство, а второе легко следует из теоремы \ref{tm} и следующей леммы.

\begin{figure}[ht]
	\centering
	\includegraphics[scale = 0.9]{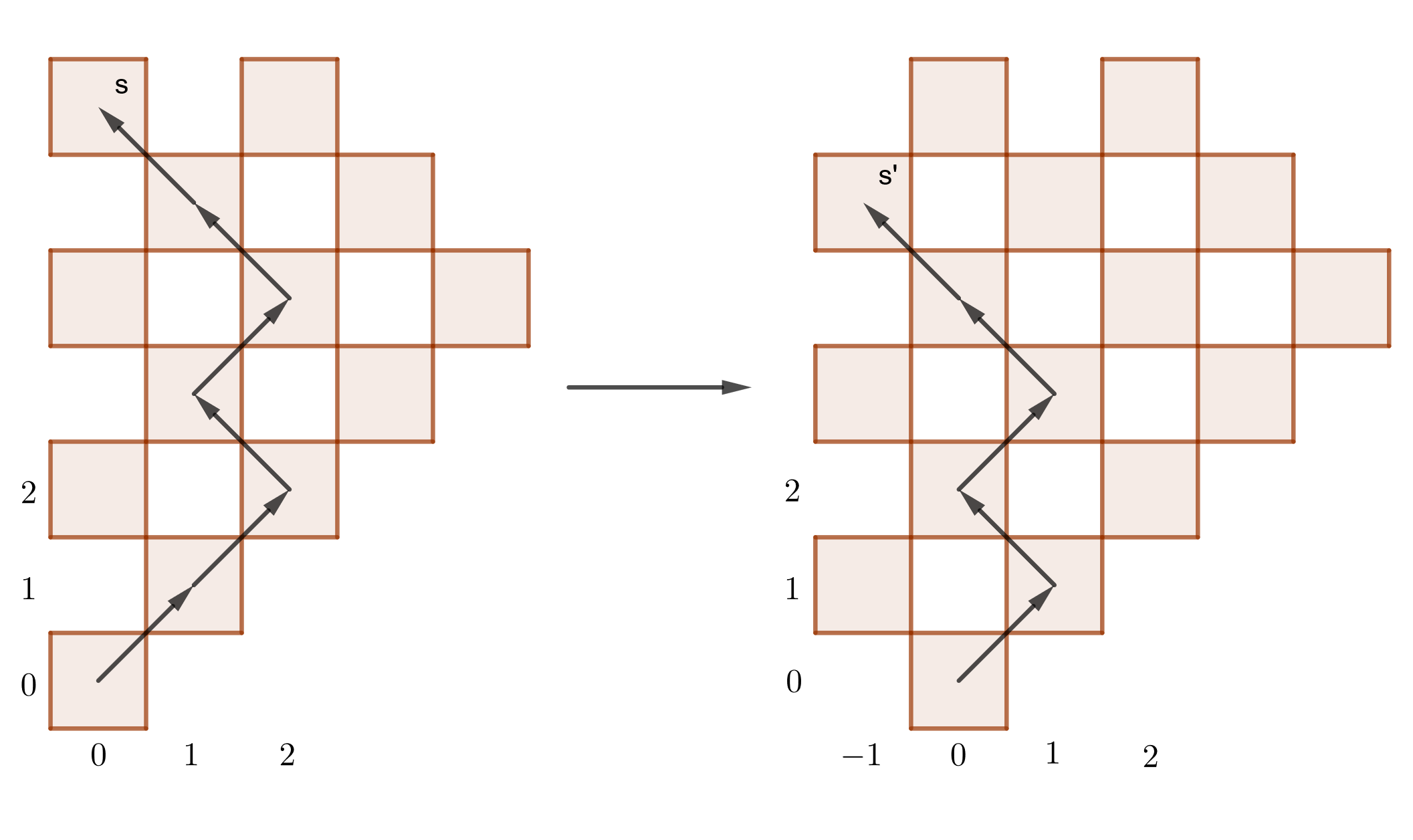}
	\vspace{-.3cm}
	\caption{Отображение путей (см. лемму \ref{l1})}
	\label{pic1}

\end{figure} 

 \begin{lem}
 \label{l1} При $t>2$ верно равенство $ a(-1,t \,\mathrm{bypass}\, {-1}) = \sqrt{2}\,a(0,t+1 \,\mathrm{bypass}\, 0) $.
 \end{lem}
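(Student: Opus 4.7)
The prefactor $2^{(1-t)/2}\cdot i$ in the definition of $a(\cdot,t\,\mathrm{bypass}\,\cdot)$ satisfies $2^{(1-t)/2}=\sqrt{2}\cdot 2^{(1-(t+1))/2}$, so the identity claimed by the lemma reduces to
\[
\sum_{s}(-i)^{\mathrm{turns}(s)} \;=\; \sum_{s'}(-i)^{\mathrm{turns}(s')},
\]
where $s$ ranges over lattice paths from $(0,0)$ to $(-1,t)$ with first step $(1,1)$ that bypass the line $x=-1$, and $s'$ over the analogous paths from $(0,0)$ to $(0,t+1)$ that bypass $x=0$. I would establish this identity by exhibiting an explicit turn-preserving bijection $\phi$ between the two sets of paths.

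Given a bypass-$(-1)$ path $s=(s_1,s_2,\ldots,s_t)$ with $s_1=(1,1)$ and $s_t=(-1,1)$, I would define
\[
\phi(s)=\bigl(s_1,(1,1),s_2,s_3,\ldots,s_t\bigr),
\]
that is, insert an extra $(1,1)$ step in the second position. Geometrically, $\phi(s)$ agrees with $s$ up to time $1$ and, from time $2$ onward, traces the vertices of $s$ translated by the vector $(1,1)$; consequently its endpoint is $(-1,t)+(1,1)=(0,t+1)$ and its last step remains $s_t=(-1,1)$.

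Three routine verifications then finish the argument. \emph{Well-definedness:} since $s$ sits in $\{x\ge 0\}$ at times $1,\ldots,t-1$, the translated vertices of $\phi(s)$ at times $2,\ldots,t$ lie in $\{x\ge 1\}$, so $\phi(s)$ genuinely bypasses $x=0$. \emph{Bijectivity:} any bypass-$0$ path from $(0,0)$ to $(0,t+1)$ with $s'_1=(1,1)$ is automatically forced to satisfy $s'_2=(1,1)$ (otherwise its vertex at time $2$ would be $(0,2)$); deleting this second step supplies the inverse $\phi^{-1}$. \emph{Turn count:} the inserted consecutive pair $(1,1),(1,1)$ contributes no turn at position $1$, while the pair $s'_2=(1,1),\,s'_3=s_2$ is a turn in $\phi(s)$ exactly when $s_1=(1,1),\,s_2$ is a turn in $s$; all remaining transitions coincide tautologically, so $\mathrm{turns}(\phi(s))=\mathrm{turns}(s)$.

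The main, and essentially only, subtlety is the bookkeeping of the positional constraint under the shift by $(1,1)$; once one observes that the second step of every qualifying bypass-$0$ path is forced to be $(1,1)$, the bijection becomes canonical and the two sums above agree term by term. I expect Figure~\ref{pic1} to illustrate precisely this correspondence.
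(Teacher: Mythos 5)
Your proposal is correct and is essentially the paper's own argument: your map $\phi$ (insert the forced $(1,1)$ step and shift the tail by $(1,1)$) is exactly the inverse of the paper's bijection, which translates a path contributing to $a(0,t+1\,\mathrm{bypass}\,0)$ by $(-1,-1)$ and deletes its first step, with the same key point that the second step of such a path is forced, so turns are preserved and the factor $\sqrt{2}$ comes from the length difference. Nothing essentially different; only the direction of the correspondence is reversed.
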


\begin{proof}

 Построим биекцию между путями, дающими вклад в $a(0,t+1 \,\mathrm{bypass}\, 0)$, и путями, дающими вклад в $a(-1,t\,\mathrm{bypass}\,{-1})$. Рассмотрим путь $s$, дающий вклад в $a(0,t+1\,\mathrm{bypass}\,0)$, и построим по  нему путь $s'$, дающий вклад в $a(-1,t \,\mathrm{bypass}\, {-1})$, сдвинув $s$ на вектор $(-1,-1)$ и удалив первый ход (как на рис. \ref{pic1}). Заметим, что для $t>2$ построенное отображение $s \mapsto s'$ --- биекция, так как если бы построенный путь $s'$ начинался с хода влево-вверх, то $s$ проходил бы через точку $(0,2)$ и не мог бы давать вклад в $a(0,t+1\,\mathrm{bypass}\,0)$. Из того, что путь $s'$ короче на один шаг, следует равенство $\sqrt{2}a(s) = a(s')$. Суммируя по всем путям, получаем утверждение леммы.
 \end{proof}

 \begin{lem}
 \label{l2} При $t>2$ верно равенство $a(2,t \,\mathrm{bypass}\, 2) = -i\, a(0,t \,\mathrm{bypass}\, 0)$.
 \end{lem}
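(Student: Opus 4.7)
The plan is to construct a bijection between basic paths $s$ contributing to $a(0,t\,\mathrm{bypass}\,0)$ and paths $s'$ contributing to $a(2,t\,\mathrm{bypass}\,2)$ such that $a(s')=-i\,a(s)$ on each pair; summing over the bijection then yields the claimed identity. This mirrors the strategy of Lemma \ref{l1}, but uses a reflection rather than a shift.

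I would begin by pinning down the rigid part of each class of paths. For $a(0,t\,\mathrm{bypass}\,0)$ the first step of $s$ goes $(0,0)\to(1,1)$ and the last step goes $(1,t-1)\to(0,t)$; the bypass condition forbids $x=0$ at interior vertices, so the vertex of $s$ at height $2$ is forced to be $(2,2)$ (the alternative $(0,2)$ would touch the forbidden line) and, symmetrically, the vertex at height $t-2$ is forced to be $(2,t-2)$. For $a(2,t\,\mathrm{bypass}\,2)$ the penultimate vertex of $s'$ must be $(1,t-1)$, since the alternative $(3,t-1)$ would force an earlier crossing of $x=2$; the bypass condition at $x=2$ then forces the vertices of $s'$ at heights $2$ and $t-2$ to be $(0,2)$ and $(0,t-2)$ respectively. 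In both cases the remaining freedom lies in a middle sub-path from $(1,1)$ to $(1,t-1)$ of length $t-2$, staying in $x\geq 1$ for $s$ and in $x\leq 1$ for $s'$.

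The bijection then sends $s$ to $s'$ by reflecting this middle sub-path about the vertical line $x=1$ (i.e. swapping step types $(1,1)\leftrightarrow(-1,1)$) while keeping the initial step $(0,0)\to(1,1)$ and replacing the final step $(1,t-1)\to(0,t)$ by $(1,t-1)\to(2,t)$. The reflection $(x,y)\mapsto(2-x,y)$ fixes both endpoints of the middle, interchanges the half-planes $x\geq 1$ and $x\leq 1$, and carries the forced constraints ``middle begins with $(1,1)$ and ends with $(-1,1)$'' for $s$ into the forced constraints ``middle begins with $(-1,1)$ and ends with $(1,1)$'' for $s'$, so the map is an honest bijection between the two sets of paths.

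To compare turn counts, observe that reflection preserves the number of turns interior to the middle sub-path. At the junction $(1,1)$ the path $s$ has two consecutive $(1,1)$-steps and contributes no turn, whereas $s'$ has a $(1,1)$-step followed by a $(-1,1)$-step and contributes one turn; at the junction $(1,t-1)$ both $s$ (two consecutive $(-1,1)$-steps) and $s'$ (two consecutive $(1,1)$-steps) contribute no turn. Hence $\mathrm{turns}(s')=\mathrm{turns}(s)+1$, so $a(s')=-i\cdot a(s)$ from the definition, and summing over all pairs $(s,s')$ gives the lemma. The hypothesis $t>2$ is used precisely to ensure that the four forced vertices at heights $2$ and $t-2$ are genuine interior vertices of the paths; beyond this bookkeeping, which is no harder than in Lemma \ref{l1}, I do not expect a real obstacle.
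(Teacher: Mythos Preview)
Your argument is correct and is essentially the paper's own proof: the paper also reflects the path across $x=1$ while keeping the first step $(0,0)\to(1,1)$ fixed, observes this is a bijection between the two path sets, and notes the single extra turn at $(1,1)$ producing the factor $-i$. Your write-up is in fact more explicit than the paper's about the forced vertices at heights $2$ and $t-2$ and about why the junction at $(1,t-1)$ contributes no additional turn.
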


\begin{figure}[ht]
	\centering
	\includegraphics[scale = 0.9]{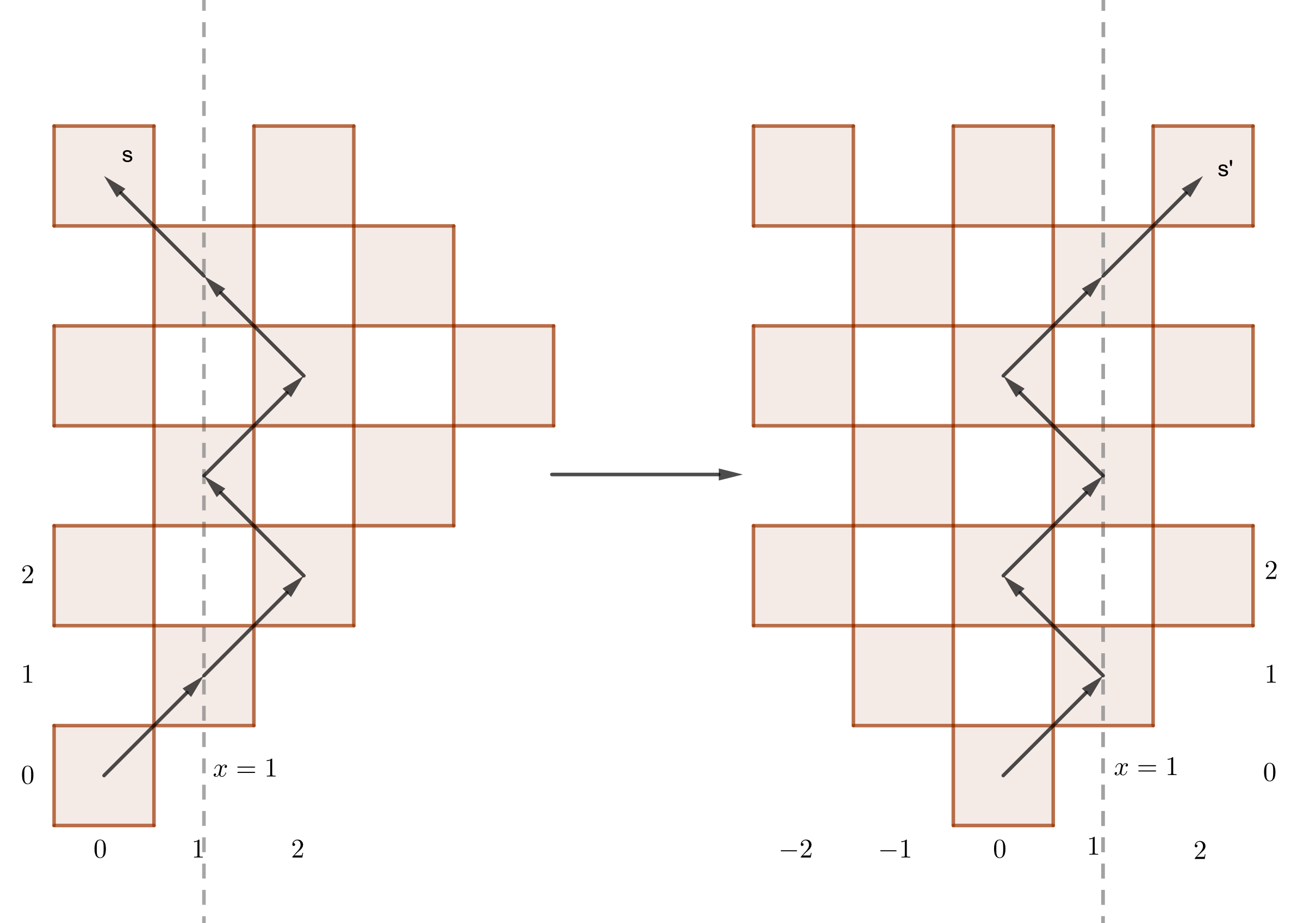}
	\vspace{-.3cm}
	\caption{Отображение путей (см. лемму \ref{l2})}
	\label{pic2}

\end{figure}

\begin{proof}
 Построим биекцию между путями, дающими вклад в $a(0,t \,\mathrm{bypass}\, 0)$, и путями, дающими вклад в $a(2,t \,\mathrm{bypass}\, 2)$. Рассмотрим путь $s$, дающий вклад в $a(0,t \,\mathrm{bypass}\, 0)$, и построим по нему путь $s'$, дающий вклад в $a(2,t \,\mathrm{bypass}\, 2)$, получающийся из $s$ отражением относительно прямой $x=1$, начиная со второго хода (как на рис. \ref{pic2}).
 
  Очевидно, что это биекция. Более того, верно $-i\, a(s) = a(s')$, если $s$ и $s'$ не заканчивались на горизонтали $t=2$. Суммирование по путям $s$ и $s'$ завершает доказательство леммы.
  \end{proof}

\begin{lem}
 \label{l3}
 При $t>3$ верно равенство $$a(3,t\,\mathrm{bypass}\,3) = \sqrt{2} \, a(2,t-1 \,\mathrm{bypass}\, 2) -i\,a(-1,t \,\mathrm{bypass}\, {-1}).$$
\end{lem}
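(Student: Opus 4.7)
The plan is to establish two bijections in the spirit of Lemmas~\ref{l1} and~\ref{l2}, and then combine them by a short linear-algebra step.

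Any path $s$ contributing to $a(3,t\,\mathrm{bypass}\,3)$ must have $s(0)=(0,0)$, $s(1)=(1,1)$, and stay in $x\le 2$ at all intermediate times; in particular $s(2)\in\{(0,2),(2,2)\}$. Call $s$ \emph{straight-start} if $s(2)=(2,2)$ (no turn at the vertex $(1,1)$) and \emph{turning-start} if $s(2)=(0,2)$ (turn at $(1,1)$), and write $S$, $T$ for the partial amplitudes $2^{(1-t)/2}i\sum(-i)^{\mathrm{turns}(s)}$ over the two classes, so that $a(3,t\,\mathrm{bypass}\,3)=S+T$.

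We first handle straight-start paths via a shift. Setting $\hat s(i):=s(i+1)-(1,1)$ for $0\le i\le t-1$, one verifies that $\hat s$ starts at $(0,0)$, begins with step $(1,1)$ (since $s(2)=(2,2)$), ends at $(2,t-1)$, and stays in $x\le 1$ intermediately (since $s$ stays in $x\le 2$ and the $x$-coordinate is shifted by $-1$); hence $\hat s\in a(2,t-1\,\mathrm{bypass}\,2)$. The correspondence is easily inverted, and because $s$ has no turn at $(1,1)$ the turns of $s$ at vertices $2,\dots,t-1$ biject with those of $\hat s$, giving $\mathrm{turns}(\hat s)=\mathrm{turns}(s)$. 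Comparing the two prefactors yields $S=\tfrac{1}{\sqrt 2}\,a(2,t-1\,\mathrm{bypass}\,2)$.

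The key step is a reflection bijection applied to \emph{all} paths in $a(3,t\,\mathrm{bypass}\,3)$. Given $s$, define $s'$ by $s'(0):=(0,0)$ and $s'(i):=(2-s(i)_x,\,i)$ for $i\ge 1$: we keep the first edge and reflect the rest across the line $x=1$. This sends $(3,t)$ to $(-1,t)$ and the constraint $x\le 2$ to $x\ge 0$, so $s\mapsto s'$ is a bijection onto $a(-1,t\,\mathrm{bypass}\,{-1})$. At internal vertices $i\ge 2$ the turn count is preserved, because the $x$-directions of all edges after the first are simultaneously negated. At vertex~$1$, however, the turn flips: the first edge is unchanged while the second edge has its $x$-direction reversed, so exactly one of $s$, $s'$ has a turn there. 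Thus $\mathrm{turns}(s')=\mathrm{turns}(s)+1$ for straight-start $s$ and $\mathrm{turns}(s')=\mathrm{turns}(s)-1$ for turning-start $s$, which translates into the weight relations $w(s')=-i\,w(s)$ and $w(s')=i\,w(s)$ respectively. Summing over the bijection gives $a(-1,t\,\mathrm{bypass}\,{-1})=-iS+iT$.

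The lemma now follows from $T-S=-i\,a(-1,t\,\mathrm{bypass}\,{-1})$ together with $S=\tfrac{1}{\sqrt 2}\,a(2,t-1\,\mathrm{bypass}\,2)$:
\[
a(3,t\,\mathrm{bypass}\,3)\;=\;S+T\;=\;2S-i\,a(-1,t\,\mathrm{bypass}\,{-1})\;=\;\sqrt 2\,a(2,t-1\,\mathrm{bypass}\,2)-i\,a(-1,t\,\mathrm{bypass}\,{-1}).
\]
The main obstacle will be the careful turn bookkeeping at the vertex $(1,1)$ in the reflection bijection: getting the $\pm 1$ shift in the turn count exactly right (with the correct dependence on the class of $s$) is what ultimately produces the factor $-i$ in front of $a(-1,t\,\mathrm{bypass}\,{-1})$. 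The restriction $t>3$ is used to guarantee that both classes of paths can be nonempty and that the shift bijection does not degenerate at the boundary.
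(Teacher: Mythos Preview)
Your proof is correct and follows essentially the same strategy as the paper: split the paths contributing to $a(3,t\,\mathrm{bypass}\,3)$ according to whether the second step goes to $(2,2)$ or $(0,2)$ (the paper's sets $R$ and $L$ are exactly your straight-start and turning-start classes), use the shift $s\mapsto s(\cdot+1)-(1,1)$ to identify the straight-start class with $a(2,t-1\,\mathrm{bypass}\,2)$, and use the reflection across $x=1$ fixing the first edge to relate to $a(-1,t\,\mathrm{bypass}\,-1)$.

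The one genuine organizational difference is in how the reflection is deployed. The paper reflects only the $L$ class; its image is then the subset of paths in $a(-1,t\,\mathrm{bypass}\,-1)$ whose second step is right-up, and the remaining paths (second step left-up) are handled by a third bijection --- remove the first step, shift, and reflect across $x=0$ --- back to $a(2,t-1\,\mathrm{bypass}\,2)$. You instead reflect \emph{all} paths at once, observe that this is a bijection onto all of $a(-1,t\,\mathrm{bypass}\,-1)$, and keep track of the class-dependent phase $\mp i$ coming from the turn flip at vertex~$1$; the relation $a(-1,t\,\mathrm{bypass}\,-1)=-iS+iT$ then replaces the paper's third bijection. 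Your route is a little more economical (two bijections instead of three) at the cost of the careful $\pm 1$ turn bookkeeping you flag; the paper's route avoids that bookkeeping but pays with the extra map. Both lead to the same intermediate identity $T=\tfrac{1}{\sqrt 2}a(2,t-1\,\mathrm{bypass}\,2)-i\,a(-1,t\,\mathrm{bypass}\,-1)$, so the difference is purely a matter of packaging. (Incidentally, your argument actually goes through verbatim for $t=3$ as well; the hypothesis $t>3$ is only needed later, when Lemma~\ref{l2} is applied to $a(2,t-1\,\mathrm{bypass}\,2)$.)
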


\begin{figure}[ht]
	\centering
	\includegraphics[scale = 0.9]{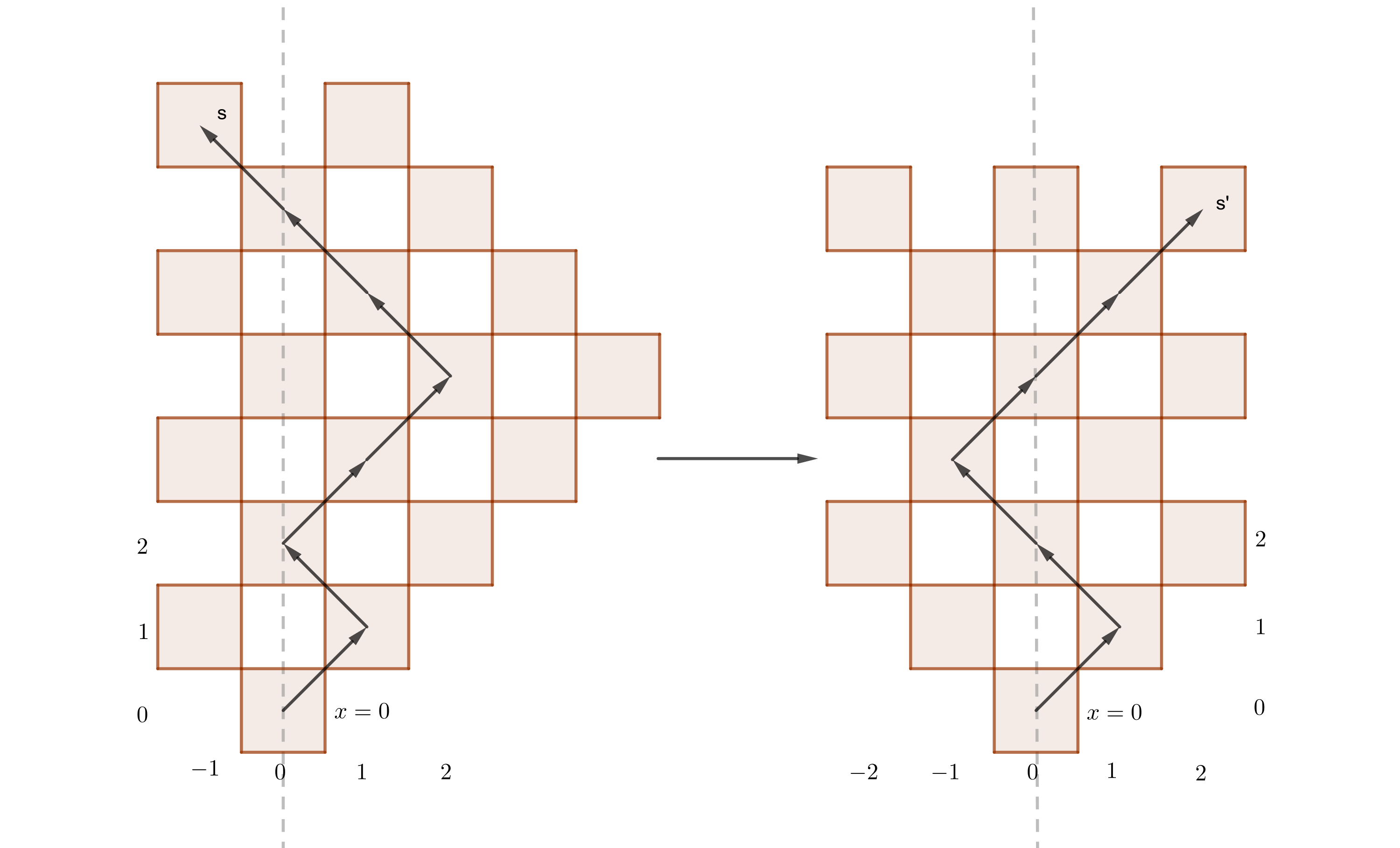}
	\vspace{-.3cm}
	\caption{Отображение путей (см. лемму \ref{l3})}
	\label{pic3}

\end{figure}

\begin{proof}

 Разобьем пути, дающие вклад в $a(3,t \,\mathrm{bypass}\, 3)$, на два множества:

$R$ --- множество путей, которые начинаются с двух ходов вправо-вверх;

$L$ --- множество путей, которые начинаются с хода вправо-вверх, а затем идет ход влево-вверх.

Посторим биекцию множества $R$ и множества путей, дающих вклад в 
$a(2,t-1  \,\mathrm{bypass}\, 2)$, убирая первый ход и сдвигая путь на вектор $(-1,-1)$.

Суммируя по всем таким путям, получаем $$\sum_{s\in R} a(s) = \frac{a(2,t-1 \,\mathrm{bypass}\, 2)}{\sqrt{2}}.$$

Построим биекцию множества $L$ и множества путей, начинающихся двумя ходами вправо-вверх и дающими вклад в $a(-1,t \,\mathrm{bypass}\, -1)$. Для этого отразим путь из $L$, начиная со второго хода, относительно прямой $x=1$ (Это отображение обатно изображенному на рис. \ref{pic2}). Для вычисления $\sum_{s\in L} a(s)$ остается посчитать сумму по путям, которые дают вклад в $a(-1,t \,\mathrm{bypass}\, -1)$ и начинаются с хода вправо-вверх, а затем идет ход влево-вверх. Эти пути мы отобразим в пути, которые дают вклад в $a(2,t-1 \,\mathrm{bypass}\, 2)$, убрав первый ход и сделав сдвиг на вектор $(-1,-1)$, а затем отражение относительно прямой $x=0$ (как на рис. \ref{pic3}). В результате получаем $$\sum_{s\in L} a(s) =  -i \,a(-1,t \,\mathrm{bypass}\, {-1}) + \frac{1}{\sqrt{2}}a(2,t-1 \,\mathrm{bypass}\, 2).$$

Прибавляя вычисленную ранее сумму по $s\in R$, получаем требуемое утверждение.
\end{proof}

\begin{prop}
 \label{prop}
 При $t>3$ верно следующее равенство $$a(3,t\,\mathrm{bypass}\,3)  = -i \, \sqrt{2}\, a(0,t+1 \,\mathrm{bypass}\, 0)  -i \, \sqrt{2}\, a(0,t-1 \,\mathrm{bypass}\, 0).$$

\end{prop}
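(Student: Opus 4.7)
The plan is to derive this proposition as a direct algebraic consequence of Lemmas~\ref{l1}, \ref{l2}, and \ref{l3}, which have already been established by explicit path-bijection arguments. No further combinatorics is needed; the whole proof reduces to a substitution together with a check of the ranges of validity.

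First I would verify that the hypothesis $t>3$ in the proposition is exactly what is needed so that all three lemmas can be applied simultaneously. Lemma~\ref{l3} itself requires $t>3$. Lemma~\ref{l1} will be used at time $t$ and only requires $t>2$, which holds. Lemma~\ref{l2} will be used at time $t-1$, which requires $t-1>2$, i.e.\ $t>3$. Thus all three hypotheses are in force exactly on the range stated in the proposition.

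Next I would take the identity of Lemma~\ref{l3},
$$a(3,t\,\mathrm{bypass}\,3) = \sqrt{2}\,a(2,t-1\,\mathrm{bypass}\,2) - i\,a(-1,t\,\mathrm{bypass}\,{-1}),$$
and replace $a(2,t-1\,\mathrm{bypass}\,2)$ by $-i\,a(0,t-1\,\mathrm{bypass}\,0)$ using Lemma~\ref{l2} (with $t$ shifted to $t-1$), and replace $a(-1,t\,\mathrm{bypass}\,{-1})$ by $\sqrt{2}\,a(0,t+1\,\mathrm{bypass}\,0)$ using Lemma~\ref{l1}. Collecting the factors $\sqrt{2}\cdot(-i)$ in the first term and $-i\cdot\sqrt{2}$ in the second then yields the two $-i\sqrt{2}$ coefficients in the claim.

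The only ``hard part'' is the bookkeeping: one must track the two shifts $t \mapsto t\pm 1$, two factors of $i$, and two factors of $\sqrt{2}$, and verify that the signs combine as stated. This is essentially why the statement is singled out as a proposition rather than absorbed into the previous lemmas: it is the precise summary identity that will be used in combination with Theorem~\ref{tm} to compute $\sum_t P(3,t\,\mathrm{bypass}\,3)$ and thereby deduce Theorem~\ref{thm}.
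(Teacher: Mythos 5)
Your proposal is correct and coincides with the paper's own proof: the proposition is obtained by substituting Lemma~\ref{l2} (at time $t-1$) and Lemma~\ref{l1} (at time $t$) into the identity of Lemma~\ref{l3}, with exactly the range check $t>3$ you describe. Nothing further is needed.
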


\begin{proof}  По леммам \ref{l1} -- \ref{l3} имеем $$a(3,t \,\mathrm{bypass}\, 3) =  \sqrt{2}\, a(2,t-1 \,\mathrm{bypass}\, 2) -i\,a(-1,t \,\mathrm{bypass}\, {-1}) = $$
 
  $$ =  -i \,\sqrt{2} \, a(0,t+1 \,\mathrm{bypass}\, 0)  -i \,\sqrt{2} \, a(0,t-1 \,\mathrm{bypass}\, 0).$$ 
  \end{proof}

\begin{proof}[Доказательство теоремы \ref{thm}] Из теоремы \ref{tm} получаем, что $a(0,4t + 4 \,\mathrm{bypass}\, 0) = \frac{(-1)^t (^{2t}_t)}{(t+1)2^{2t + 3/2}}$, а  $a(0,4t+2 \,\mathrm{bypass}\, 0) = 0$ при целых $t>0$. Тогда из предложения \ref{prop} получаем $P(3,4t+1 \,\mathrm{bypass}\, 3) = 2|a(0,4t \,\mathrm{bypass}\, 0)|^2$, а $P(3,4t+3 \,\mathrm{bypass}\, 3) = 2|a(0,4(t+1) \,\mathrm{bypass}\, 0)|^2.$ Так как $\sum_{t = 1}^\infty P(0,t \,\mathrm{bypass}\, 0) = \frac{2}{\pi}$ по теореме \ref{tm}, то $$\sum_{t \in \ZZ}P(3,t \,\mathrm{bypass}\, 3) = \sum_{t > 3} 2|a(0,t+1 \,\mathrm{bypass}\, 0)|^2 + \sum_{t > 3} 2|a(0,t-1 \,\mathrm{bypass}\, 0)|^2 + P(3,3 \,\mathrm{bypass}\, 3) = $$ $$= 2 \left(\sum_{t = 1}^\infty P(0,t \,\mathrm{bypass}\, 0) - P(0,2\,\mathrm{bypass}\, 0) - P(0,4\,\mathrm{bypass}\, 0)\right)+$$ $$ +  2\left( \sum_{t = 1}^\infty P(0,t \,\mathrm{bypass}\, 0) - P(0,2\,\mathrm{bypass}\, 0)\right) + P(3,3 \,\mathrm{bypass}\, 3)=$$ $$ = 2 \left( \frac{2}{\pi} - \frac{1}{2} - \frac{1}{8} \right) + 2\left( \frac{2}{\pi} - \frac{1}{2}\right) + \frac{1}{4} = \frac{8}{\pi} - 2$$
\end{proof}

\section*{Вариации}

Для нового доказательства теоремы \ref{tm} нам потребуется слудующая лемма, аналогичная известному реккурентному соотношению для чисел Каталана.

 \begin{lem}
 \label{polovina}
 Для любого целого $n > 2$ выполнено:
 
 $$a(0,2n \,\mathrm{bypass}\, 0) = \frac{-1}{\sqrt{2}}\sum_{j=2}^{n-2} a(0,2(n-j) \,\mathrm{bypass}\, 0) a(0,2j \,\mathrm{bypass}\, 0). $$
 \end{lem}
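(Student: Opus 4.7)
The plan is to prove the lemma via a bijective decomposition of bypass-0 paths with careful turn accounting; Theorem~\ref{tm} is not used.

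Every bypass-0 path $s$ of length $2n$ from $(0,0)$ to $(0,2n)$ is forced to begin $(0,0)\!\to\!(1,1)\!\to\!(2,2)$ and to end $(2,2n-2)\!\to\!(1,2n-1)\!\to\!(0,2n)$, so its ``middle'' is a bridge from $(1,1)$ to $(1,2n-1)$ of length $2n-2$ staying in $x\geq 1$. First I would decompose this bridge at its first return to $x=1$ in the interior, at a vertex $(1,2j+1)$, setting $j=n-1$ if no such interior return exists. The initial arc from $(1,1)$ to $(1,2j+1)$, which lies in $x\geq 2$ in between, translates by $(-1,-1)$ to a bypass-0 path $s_1$ of length $2j$; the remainder from $(1,2j+1)$ to $(1,2n-1)$, extended by a virtual prepended NE step from $(0,2j)$ and by the actual final NW step of $s$, translates by $(0,-2j)$ to a bypass-0 path $s_2$ of length $2(n-j)$. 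As $j$ ranges over $\{1,\ldots,n-1\}$ and $s_1,s_2$ vary, this is a bijection onto all bypass-0 paths of length $2n$.

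Next I would count turns. For $1\leq j\leq n-2$, the junction $(1,2j+1)$ lies in the interior of $s$ and carries a NW--NE turn absent from both $s_1$ and $s_2$, giving $\mathrm{turns}(s)=\mathrm{turns}(s_1)+\mathrm{turns}(s_2)+1$. For $j=n-1$ the remainder is trivial and $s_2$ is the unique length-$2$ bypass-0 path; its single turn at $(1,1)$ in $s_2$ is a phantom artefact of the virtual prepended NE step, since the actual predecessor of $(1,2n-1)$ in $s$ is NW, and hence $\mathrm{turns}(s)=\mathrm{turns}(s_1)+\mathrm{turns}(s_2)-1$. Plugging into $a(0,2n\,\mathrm{bypass}\,0)=2^{(1-2n)/2}\,i\sum_s(-i)^{\mathrm{turns}(s)}$, using $\sum_s(-i)^{\mathrm{turns}(s)}=-i\cdot 2^{(2m-1)/2}\,a(0,2m\,\mathrm{bypass}\,0)$ for each factor, and simplifying the powers of $2$ and $-i$, one obtains
$$a(0,2n\,\mathrm{bypass}\,0)=\tfrac{1}{2}\,a(0,2(n-1)\,\mathrm{bypass}\,0)-\tfrac{1}{\sqrt{2}}\sum_{j=1}^{n-2}a(0,2j\,\mathrm{bypass}\,0)\,a(0,2(n-j)\,\mathrm{bypass}\,0).$$

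Finally, since $a(0,2\,\mathrm{bypass}\,0)=1/\sqrt{2}$, the $j=1$ summand equals $-\tfrac{1}{2}\,a(0,2(n-1)\,\mathrm{bypass}\,0)$, which exactly cancels the first term of the previous identity and leaves the lemma's statement with summation over $j\in\{2,\ldots,n-2\}$. The main obstacle is the delicate turn bookkeeping at the boundary case $j=n-1$: the sign flip from $+1$ to $-1$ in the turn formula, caused by the phantom turn of the trivial $s_2$, is precisely the source of the extra $\tfrac{1}{2}\,a(0,2(n-1)\,\mathrm{bypass}\,0)$ correction that must be absorbed by the $j=1$ term to recover the clean statement of the lemma.
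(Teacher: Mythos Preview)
Your proof is correct and follows essentially the same approach as the paper's: both arguments decompose a bypass-$0$ path at its first return to the line $x=1$ after $(1,1)$, set up the same bijection with pairs $(s_1,s_2)$ of shorter bypass-$0$ paths indexed by $j\in\{1,\ldots,n-1\}$, carry out the same turn bookkeeping (including the sign flip at the degenerate case $j=n-1$), and then observe that the $j=1$ and $j=n-1$ contributions cancel to leave the sum over $j\in\{2,\ldots,n-2\}$. The only cosmetic difference is that the paper packages the weights as $-\sqrt{2}\,a(p)=a(m)a(l)$ (resp.\ $+\sqrt{2}\,a(p)$ for $j=n-1$) and sums directly, whereas you pass through the intermediate identity $a(0,2n\,\mathrm{bypass}\,0)=\tfrac{1}{2}a(0,2(n-1)\,\mathrm{bypass}\,0)-\tfrac{1}{\sqrt{2}}\sum_{j=1}^{n-2}(\cdots)$ before cancelling.
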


\begin{proof}

См. рис. \ref{piclem}. Рассмотрим путь $p$ шашки из точки $(0,0)$ в точку $(0,2n)$, который не имеет других точек пересечения с прямой $x=0$.. Заметим, что при $n > 1$ такой путь пересекает прямую $x=2$ в точке $(2,2)$. Так как $p$ заканчивается на прямой $x=0$, то он пересекает прямую $x=1$ в еще хотя бы одной точке  $(1,2j + 1)$, отличной от $(1,1)$. Здесь $j$ может принимать значения от $1$ до $n-2$. Выберем среди этих $j$ наименьшее. Рассмотрим отображение, которое каждому пути $p$ сопоставляет пару путей $(m,l)$, стартующих из $(0,0)$, где путь $l$ повторяет все ходы  пути $p$ со второго и до $(2j+1)$-го, а путь $m$ начинается с хода вправо-вверх, а затем повторяет ходы пути $p$ с $(2j+2)$-го до последнего. 
 
\begin{figure}[ht]
	\centering
	\includegraphics[scale = 0.50]{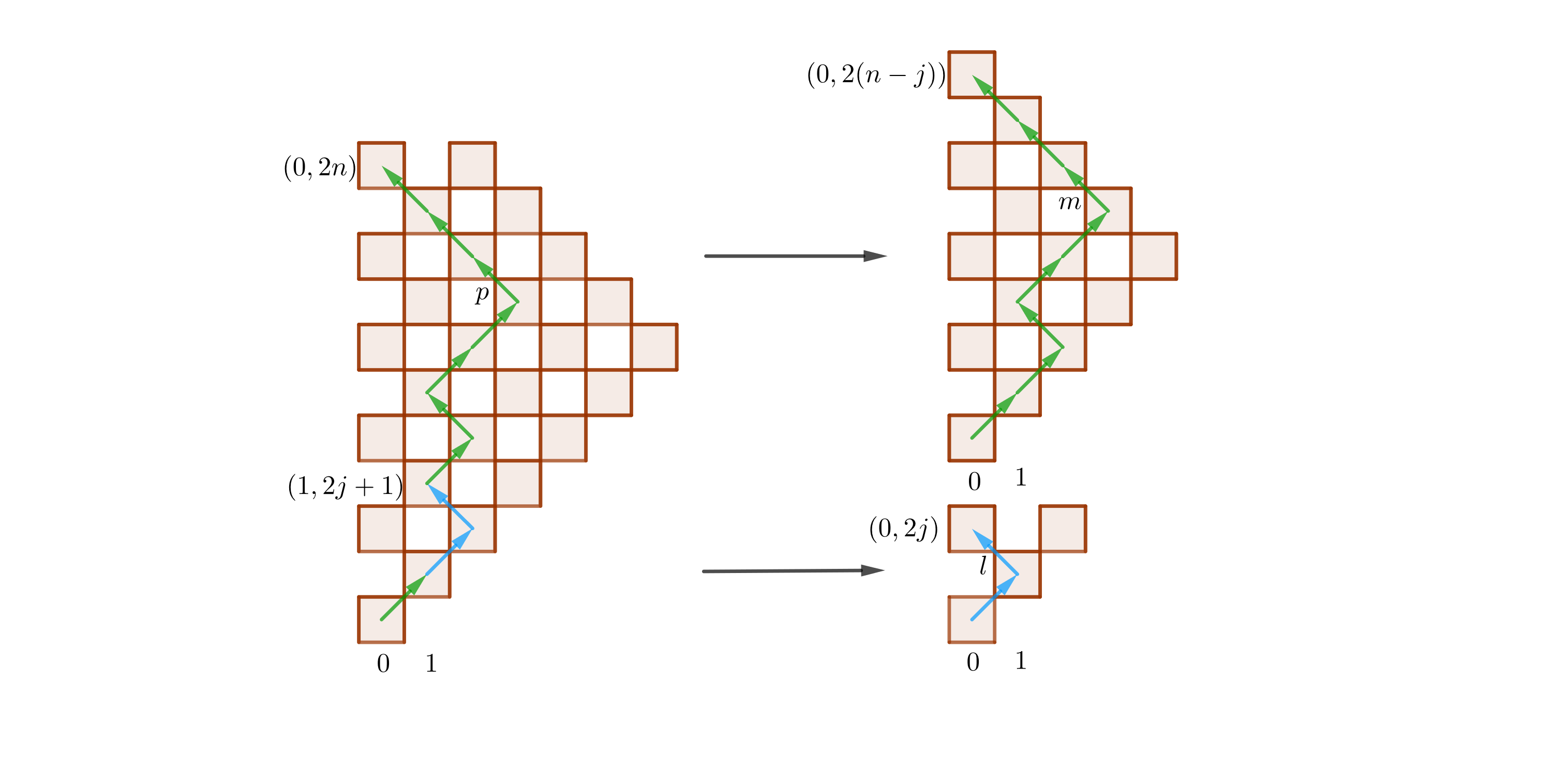}
	\vspace{-.3cm}
	\caption{Отображение путей (см. лемму \ref{polovina})}
	\label{piclem}

\end{figure}

  Покажем, что это отображение --- биекция между множеством путей, дающими вклад в $a(0,2n \,\mathrm{bypass}\, 0)$, и объединением множеств пар путей, дающих вклад в $a(0,2(n-j) \,\mathrm{bypass}\, 0)$ и $a(0,2j \,\mathrm{bypass}\, 0)$ соответственно, по всем $j$ от $1$ до  $n-1$. Для этого построим обратное отображение, т.е. по любой паре $(m,l)$ восстановим путь $p$. Сделаем из точки $(0,0)$ сначала ход вправо-вверх, затем все ходы пути $l$, а затем ходы пути $m$,  кроме первого. Легко видеть, что построенный путь приходит в клетку $(0,2n)$ и он впервые пересекает прямую $x=1$ в точке $(1,2j+1)$. Ясно, что построенное отображение обратно исходному, а значит, они оба --- биекции.

  Тогда при $j <n-1$ имеем $$-\sqrt{2} \, a(p) = -i  (-i)^{\mathrm{turns}(p)}  2^{1 - n} =   i  (-i)^{\mathrm{turns}(m)}  2^{1/2 - n+j}  i  (-i)^{\mathrm{turns}(l)}  2^{1/2 - j} = a(m)a(l), $$ так как путь $p$ имеет такую же длину и на один поворот больше, чем $l$ и $m$ в сумме.
  
   Если же $j = n-1$, то $\sqrt{2} \, a(p)= a(l)a(m),$ так как путь $p$ имеет такую же длину и на один поворот меньше, чем $l$ и $m$ в сумме.
   
    Тогда суммируя по всем путям $p$, получаем:
 
 $$-\sqrt{2}\, a(0,2n \,\mathrm{bypass}\, 0) = \sum_{j=1}^{n-2} a(0,2(n-j) \,\mathrm{bypass}\, 0) a(0,2j \,\mathrm{bypass}\, 0) - $$ $$ - a(0,2(n-1) \,\mathrm{bypass}\, 0) a(0,2 \,\mathrm{bypass}\, 0)  = \sum_{j=2}^{n-2} a(0,2(n-j) \,\mathrm{bypass}\, 0) a(0,2j \,\mathrm{bypass}\, 0).$$
 
 \end{proof}

Теперь мы можем легко доказать теорему \ref{tm}.

\begin{proof}[Доказательство теоремы \ref{tm}]

Достаточно доказать утверждение для четного $t=2n$. Сделаем это по индукции. Базу индукции ($n=1$ и $2$) легко проверить из определений. Докажем переход.

Если $n$ --- четное больше двух, то 

 $$a(0,2n \,\mathrm{bypass}\, 0) = \frac{-1}{\sqrt{2}}\sum_{j=2}^{n-2} a(0,2(n-j) \,\mathrm{bypass}\, 0) a(0,2j \,\mathrm{bypass}\, 0) = $$ 
 $$ = \frac{-1}{\sqrt{2}}\sum_{\substack{j=2 \\ j \text{  четное}}}^{n-2}  \frac{(-1)^{(n-j)/2-1} \binom{n-j-2}{(n-j)/2-1}}{((n-j)/2)2^{n-j - 1/2}} \cdot   \frac{(-1)^{j/2-1} \binom{j-2}{j/2-1}}{(j/2)2^{j - 1/2}}  =  \frac{(-1)^{n/2-1}  \binom{n-2}{n/2-1}}{(n/2)2^{n - 1/2}}. $$

Здесь первое равенство следует из леммы \ref{polovina}, второе --- из предположения индукции, а последнее --- из рекурентной формулы для чисел Каталана.

Если же $n$ --- нечетное больше двух, то

 $$a(0,2n \,\mathrm{bypass}\, 0) = \frac{-2}{\sqrt{2}}\sum_{j=2}^{n-2} a(0,2(n-j) \,\mathrm{bypass}\, 0) a(0,2j \,\mathrm{bypass}\, 0) = $$ 
 $$ = \frac{-2}{\sqrt{2}}\sum_{\substack{j=2 \\ j \text{  четное}}}^{n-2}  0 \cdot   \frac{(-1)^{j/2-1} \binom{j-2}{j/2-1}}{(j/2)2^{j - 1/2}}  =  0. $$

Равенство $\sum_{t = 1}^\infty P(0,t \,\mathrm{bypass}\, 0) = \frac{2}{\pi}$ доказывается так же, как в \cite[доказательство теоремы 8]{Ambainis}.

\end{proof}

\end{document}